\newtheorem{theorem}{Theorem}[section]
\newtheorem{lemma}[theorem]{Lemma}
\newtheorem{claim}[theorem]{Claim}
\newtheorem{corollary}[theorem]{Corollary}
\newtheorem{definition}{Definition}
\newcommand{\qed}{\hfill $\Box$ \bigbreak}
\newenvironment{proof}{\noindent{\bf Proof.~}}{\qed}
\def\cD{{\cal D}}
\def\cF{{\cal F}}
\def\cG{{\cal G}}
\def\cM{{\cal M}}
\def\cU{{\cal U}}
\begin{document}
%%%%%%%%%%%%%%%%%%%%%%%%%%%%%%%%%%%%%%%

\title{Compact Ancestry Labeling Schemes for Trees of Small Depth\thanks{This research is supported in part by the ANR project ALADDIN, by the INRIA
project GANG, and by COST Action 295 DYNAMO.}}

\author{
Pierre Fraigniaud\\[1ex]
{\small CNRS and Univ. Paris Diderot}\\{\small\sl pierre.fraigniaud@liafa.jussieu.fr}
%\thanks{CNRS and Universit\'e Paris Diderot - Paris 7, France.
%E-mail: \texttt{Pierre.Fraigniaud@liafa.jussieu.fr}.}
\and
Amos Korman\\[1ex]
{\small CNRS and Univ. Paris Diderot}\\{\small\sl amos.korman@liafa.jussieu.fr}
%\thanks{CNRS and Universit\'e Paris Diderot - Paris 7, France.
%E-mail: \texttt{Pandit@liafa.jussieu.fr}.}
}

\date{}

\maketitle

\begin{abstract}
An {\em ancestry labeling scheme} labels the nodes of any tree in such a way
that ancestry queries between any two nodes in a tree can be answered
%in constant time
just by looking at their corresponding labels. The common measure to evaluate the quality of an
ancestry labeling scheme is by its {\em label size}, that is the maximal number of bits
stored in a label, taken over all $n$-node trees.
The design of ancestry labeling schemes finds applications in XML search engines. In the context of these
applications, even small improvements in the label size are important. In fact, the literature about this topic is interested in the  exact label size rather than just its order of magnitude.
As a result, following the proposal of an original scheme of size $2\log n$ bits,
 a considerable amount of work was devoted to improve the bound on the label size.
The current state of the art upper bound is $\log n + O(\sqrt{\log n})$ bits which
is still far from the known $\log n + \Omega(\log\log n)$ lower bound. Moreover, the hidden
constant factor in the additive $O(\sqrt{\log n})$ term is large,
which makes this term dominate the label size for typical current
XML trees.

In attempt  to provide  good performances for real XML data, we rely on
the observation that the depth of a typical XML tree is bounded from above by a small constant. Having this in mind,
we present an ancestry labeling scheme of size $\log n+2\log d +O(1)$, for the family of
trees with at most $n$ nodes and depth at most $d$.
% This scheme is practical for typical XML trees.
In addition to our main result,
we
prove a result that may be of independent interest concerning the existence of a linear {\em universal graph}
for the family of  forests with trees of bounded depth.
\end{abstract}

\newpage

%%%%%%%%%%%%%%%%%%%%%%%%%%%%%%%%%%%%%%%
\section{Introduction}
\label{section:Introduction}
%%%%%%%%%%%%%%%%%%%%%%%%%%%%%%%%%%%%%%%

\subsection{Background}
\label{section:Background}

It is often the case that when people wish to
 retrieve data from the Internet, they use search engines like Yahoo or Google which
provide full-text indexing services (the user gives some keywords and the engine returns documents containing these keywords).
In contrast to such search engines, the evolving XML Web-standard \cite{ABS99,xml}
aims for allowing more sophisticated queries of documents. By describing the semantic structure
of the document components, it allows users to not only ask
full-text  queries (find documents containing the phrase ``computer science researches'') but also ask for more sophisticated data
(find all items about computer science researches that did their Phd at ETH Z\"urich and whose age is below 35).

To implement such sophisticated queries,
Web documents obeying the XML standard are viewed as labeled trees, and
typical queries over the documents amount to
testing relationships between document items, which correspond to ancestry queries
among the corresponding tree nodes \cite{ABS99,DFFLD99,xsl,xslt}. To process such queries,
XML query engines often use an index structure, typically a big hash
table, whose entries are the tag names in the indexed documents. Due to the enormous size
of the Web data and to its distributed nature, it is essential to answer queries
using the index labels only, without accessing the actual documents.
To allow good performances, it is essential that a large portion of the index structure
resides in the main memory. Since we are dealing here with a huge number of index labels,
reducing the length of the label size, even by a constant factor,
is critical for the reduction of memory cost and for performance improvement. For more details
regarding XML search engines see, e.g., \cite{AAKMT01,AKM01,CKM02}.

Labeling schemes which are currently being used by
actual systems are variants of the following interval based ancestry labeling scheme \cite{KNR92,SK85}.
Enumerate the leaves from left to right and label each node $u$ by the interval $[\ell_s,\ell_l]$,
where $\ell_s$ (respectively, $\ell_l$) is the smallest (resp., largest) leaf descendant of $u$.
An ancestry query then amounts to an interval containment query between the corresponding interval labels.
It is easy to see that the size of the labels produced by this simple scheme is bounded by $2\log n$ bits,
where $n$ is the size of the tree.

A considerable amount of research was devoted to improve the upper bound on the label size as much as possible \cite{AAKMT01,AKM01,TZ01}.
The current state of the art upper bound \cite{AAKMT01} is $\log n + O(\sqrt{\log n})$ which
is still far from the known $\log n + \Omega(\log\log n)$ lower bound \cite{ABR05}. Moreover, the hidden
constant factor in the additive $O(\sqrt{\log n})$ term is large,
which makes this term dominate the label size in the average size of current
applications. Following that work, \cite{KMS02} suggested other ancestry labeling schemes whose worst case bound is $1.5\log n +O(1)$
but perform better than the scheme of \cite{AAKMT01} for typical XML data.

In attempt to provide good performances for real XML instances, we rely on
the observation that a typical XML tree has extremely small depth (cf. \cite{CKM02,MTP06,MBV05}).
For example, by examining  about 200,000
XML documents on the Web, Mignet et al.~\cite{MBV05}
found  that the average depth of an XML tree
is 4, and that 99\% of the trees have depth at most 8. Motivated by this observation,
we concentrate on bounded depth trees, and prove an upper bound of
 $\log n+2\log d +O(1)$ for the size of an ancestry labeling scheme for the family of
$n$-node trees whose depth  is bounded by $d$. (In fact, our bound holds even for forests
rather than just for trees.)

It is not clear whether one can adapt the techniques from previous schemes to
perform better on trees of small depth. For example, the simple interval scheme
has label size $2\log n$ also for trees with constant depth. As another example, before starting
the actual labeling process, the ancestry scheme in
\cite{AAKMT01} first transforms the given  tree to a binary tree.
This transformation already results with a tree of depth $\Omega(\log n)$, even if the given tree has constant depth.
Moreover, previous relevant schemes extensively use and rely on a specific technique,
for using {\em alphabetic codes} on different subpaths. This technique, at least on its surface, does not seem to
be more effective on short subpaths, than on long ones.

In contrast, this paper uses a different  technique that does not rely on alphabetic codes.
Informally, the idea behind our scheme is the following.
The labels of the nodes are taken from a small set of integers $U$, thus ensuring short labels.
Each integer in $U$ is associated with
 some interval taken from some limited range.
The fundamental rule of our labeling scheme is that a node $u$ is an ancestor of $v$ if and only if  the interval associated
with the label of $u$ (i.e., the corresponding integer in $U$) contains the
interval associated with the label of $v$.
That way, the ancestry query can be answered very easily, simply by comparing the corresponding intervals.
The main technical challenge is to find a way to define and nest these intervals between themselves to
  be able to appropriately map the nodes of any $n$-node forest of bounded depth into $U$, while keeping $U$ small.

\subsection{Other related work}

Implicit labeling schemes were first
introduced in \cite{KNR92}, where an elegant adjacency labeling schemes
of size $2\log n$ is established on  $n$-node trees.
That paper also notices a relation between adjacency labeling schemes and \emph{universal graphs} (see also \cite{AR02+,GP01a,KPR06}).
Precisely, it is shown  that there exists an adjacency labeling scheme with label size $k$ for a graph family $\cG$
if and only if there exists a universal graph for $\cG$ with $2^k$ nodes.

Adjacency labeling schemes on trees were further investigated in an
attempt to reduce the constant factor in the label size. In \cite{KM01}  an
adjacency labeling scheme  using label size of $\log n+O(\sqrt{\log
n})$ is presented; and in \cite{AR02+} the label size was further reduced to $\log
n+O(\log^* n)$. This current state of the art bound implies the existence
of a universal graph for the family of $n$-node trees with $2^{O(\log^*(n))}n$ nodes.

Labeling schemes were also proposed for other  decision problems on graphs,
including distance
\cite{ABR05,GP01a,GPPR01,GKKPP01,KM01,KPR06,P99:lbl,T01},
routing \cite{FG01,TZ01}, flow \cite{KK06,KKKP04}, vertex connectivity \cite{K07,KKKP04}, nearest common ancestor
\cite{AGKR01,Peleg00:lca}, and various other tree functions, such as
center, separation level, and Steiner weight of a given subset of
vertices \cite{Peleg00:lca}. See \cite{GP01b} for a survey on static labeling schemes.
Dynamic labeling schemes were investigated in a number of papers, e.g., \cite{K05,K08,KPR04,KP03}.

\subsection{Our contributions}
We present an ancestry
labeling scheme of size $\log n+2\log d +O(1)$ for the family of rooted
forests with at most $n$ nodes and depth at most $d$.
Our result is
essentially optimal for rooted trees with constant depth, and thus for the typical XML trees.

As a corollary of our main theorem, we get an adjacency scheme of size $\log n+3\log d +O(1)$ for
the family of forests with at most $n$ nodes and depth bounded by $d$. This, in particular, implies
the existence of a linear universal graph
for the family of  forests with constant depth. Namely, we show the existence
of a  graph of size $O(n)$ that contains all $n$-node forests of constant depth
as vertex induced subgraphs.

%%%%%%%%%%%%%%%%%%%%%%%%%%%%%%%%%%%%%%%
\section{Preliminaries}
%%%%%%%%%%%%%%%%%%%%%%%%%%%%%%%%%%%%%%%

Let $T$ be a tree rooted at some node $r$ referred as the {\em root} of $T$.
 The {\em depth} of a node $u\in V(T)$
is defined as 1 plus the hop distance from $u$ to the root of $T$.
In particular, the depth of the root is 1.
The depth of $T$ is the maximum depth of a node in $T$.
Let $u$ and $v$ be two nodes in $T$.
We say that $u$ is an {\em ancestor} of $v$ if $u\neq v$ and $u$ is one of the nodes on the shortest path connecting
$v$ and the root of $T$.

A {\em rooted forest} $F$ is a collection of rooted trees. The depth of $F$ is
the maximum depth of a tree in $F$.
For two nodes $u$ and $v$  in $F$,
we say that $u$ is an  ancestor of $v$ if and only if $u$ is an  ancestor of $v$ in one of the trees in $F$.
For integers $n$ and $d$, let $\cF(n,d)$ denote the family of all rooted forests with at most $n$ nodes
and depth bounded from above by $d$.

An {\em ancestry labeling
scheme} $( \cM,\cD )$ for a family of rooted forests $\cF$ is
composed of the following components:
\begin{enumerate}
\item A {\em marker} algorithm $\cM$ that, given
a forest $F$ in $\cF$,
assigns labels to its nodes.
\item A polynomial time {\em
decoder} algorithm $\cD$ that given two labels $\ell_1$ and $\ell_2$ in the output domain of $\cM$, returns a boolean in $\{0,1\}$.
\end{enumerate}
These components must satisfy that if $L(u)$ and $L(v)$ denote the labels assigned by the marker to
two nodes $u$ and $v$ in some rooted forest $F\in\cF$, then
\[
\cD(L(u),L(v))=1 \iff \mbox{$u$ is an ancestor of $v$ in $F$.}
\]
It is important to note that the decoder $\cD$ is independent of the forest $F$. Thus $\cD$ can be
viewed as a method for computing ancestry values in a ``distributed''
fashion, given any pair of labels and knowing that the forest belongs
to some specific family $\cF$.

The common complexity measure used to evaluate a labeling scheme
$(\cM,\cD)$ is the {\em label size}, that is
the maximum number of bits in a label assigned by the marker
algorithm $\cM$ to any node in any forest in $\cF$.

Given two integers $a$ and $b$, where $a<b$, let $[a,b]$ (respectively, $[a,b)$) denote
the interval containing the integers $i$ such that $a\leq i\leq b$ (resp., $a\leq i< b$).
Given a graph $G$,  let $|G|$ denote
the number of nodes in $G$.

%%%%%%%%%%%%%%%%%%%%%%%%%%%%%%%%%%%%%%%
\section{A compact ancestry labeling scheme  for $\cF(n,d)$}
%%%%%%%%%%%%%%%%%%%%%%%%%%%%%%%%%%%%%%%

This section is devoted to proving the existence of an ancestry  labeling scheme of size $\log n +2\log d +O(1)$
for the family of rooted forests in $\cF(n,d)$.
Informally, the scheme performs as follows. We construct a set of intervals $U$ such that
the nodes of any forest in $\cF(n,d)$ can be mapped to  $U$, in a way that ancestry relation can be answered using a simple interval containment test.
I.e., we make sure that $u$ is an ancestor of $v$ in some forest $F$ if and only if the interval associated with $u$ contains the interval associated
with $v$. We call such a mapping an {\em ancestry mapping}.
A label of a node in $F$ is simply a pointer to an element in $U$, and thus can be encoded using $\log |U|$ bits.
Therefore, to get short labels we need $U$ to be small.

The construction of $U$ is done by induction on the number of nodes in the forest. Assume that 
there exists some set of intervals  $U_k$, such that for any forest of size at most $2^k$, there exists an ancestry mapping from $F$ to $U_k$,
and consider now the set of forests $\cF_{k+1}$ with at most $2^{k+1}$ nodes. Of course, if every $F\in \cF_{k+1}$ would break nicely into two forests with at most $2^k$ nodes each, then one could embed the two parts separately on two interval sets $U'$ and $U''$ of the same size as $U_k$. If that was always the case, we would ultimately get
an interval set $U=U_{\log n}$ of linear size for which any forest of size at most $n$ could be embedded to $U$ via an ancestry mapping,
and that would yield an ancestry labeling scheme with label size $\log n$.

Fortunately, life is not so simple, and a forest $F\in \cF_{k+1}$ doesn't always break nicely to two equal size sub-forests.
Specifically, problems occur whenever one must break a tree $T$ of $F$ into two parts and embed one part in $U'$ and the other part in $U''$. Ideally, if $F$ is broken into $F'\cup F''\cup T$ where $F'$ is embedded in $I'\subseteq U'$, and $F''$ is embedded in $I''\subseteq U''$, then one wants to embed $T$ by borrowing what remains free in $U'\setminus I'$ and $U''\setminus I''$. This can be achieved by using various scales of sub-interval sizes, so that to embed $T$ in $J=J'\cup J''$ with $J'\subseteq U'\setminus I'$ and $J'' \subseteq U''\setminus I''$.

Two difficulties arise in this recursive approach. The first one is related to the scale of the sub-intervals in which one picks $J'$ and $J''$. Indeed, too many sub-intervals yields  too many intervals in $U_{k+1}$. On the other hand, too few sub-intervals yields too large gaps between $I'$ and $J'$ in $U'$. This prevents the intervals in $U_{k+1}$ from being sufficiently compressed, and thus also ultimately results with too many intervals in $U_{k+1}$. Determining a good tradeoff between  the amount of scaling in the sub-intervals, and
 the gaps between intervals, in thus one major issue.

The second difficulty that is faced by the recursive approach is that splitting a tree into subtrees of sizes at most half is performed by removing the separator of the tree. However, one can see that whenever a tree $T$ of $2^{k+1}$ nodes is split into a collection $T_1,\dots,T_\ell$ of subtrees by removing the separator of $T$, the subtree containing the root of $T$ plays a special role in the setting of the ancestry scheme. Dealing with this special subtree is a second important issue, for which the assumption
on the depth of the forests will play a major role. The proof of the theorem below shows how to overcome these two issues.

\begin{theorem}\label{thm:anc}
There exists an ancestry labeling scheme  for the family of rooted forests in $\cF(n,d)$ whose label size is $\log n +2\log d +O(1)$.
% Moreover, the query time of the scheme is constant, and the time to construct the scheme for a given forest is linear.
\end{theorem}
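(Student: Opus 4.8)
The plan is to construct, by induction on $k$, a set of intervals $U_k$ together with the promise that every forest $F$ with at most $2^k$ nodes and depth at most $d$ admits an ancestry mapping into $U_k$ — that is, an injection $\phi$ from $V(F)$ to $U_k$ such that $u$ is an ancestor of $v$ iff $\phi(u) \supsetneq \phi(v)$. The label of a node will be a pointer into $U_{\lceil \log n \rceil}$, costing $\log|U_{\lceil\log n\rceil}|$ bits, so the whole theorem reduces to proving the bound $|U_k| = O(2^k d^2)$. To make the induction close, the statement I would actually carry through the recursion is more refined: I would fix a ``ground interval'' $[0,N)$ in which all intervals of $U_k$ live, demand that the intervals of $U_k$ come from a bounded number of length-scales (powers of $2$, say, capped at $O(\log N)$ many), and — crucially, to handle the root-subtree issue flagged in the discussion — I would split the promise into two flavors: a mapping that uses an interval nested inside a prescribed ``outer'' interval (for subtrees hung below an already-placed ancestor) versus a free mapping. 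The depth bound enters because a chain of nested outer intervals can have length at most $d$, which is what keeps the borrowing controlled.

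The inductive step goes as follows. Given $F$ with at most $2^{k+1}$ nodes, I would first partition the trees of $F$ greedily into $F = F' \cup F'' \cup \{T\}$ where $F'$ and $F''$ each have at most $2^k$ nodes and $T$ is a single tree, chosen so that $T$ is the (unique) piece that may straddle the boundary. By induction, $F'$ embeds into a copy $U'$ of $U_k$ sitting in $[0,N')$ and $F''$ into a copy $U''$ of $U_k$ sitting in $[N', N'+N'')$. The work is to place $T$: I would pick a centroid-like separator of $T$ splitting it into the root-subtree $T_0$ (containing $\mathrm{root}(T)$) and smaller pieces $T_1,\dots,T_\ell$, each of size at most $2^k$, recursively embed each $T_i$ ($i\ge 1$) into free space remaining in $U' \setminus (\text{image of }F')$ or $U'' \setminus (\text{image of }F'')$ using the nested-mapping flavor of the inductive hypothesis, and then fit the ``spine'' connecting $\mathrm{root}(T)$ down to the attachment points of the $T_i$ — here the depth bound caps the spine length by $d$, so at most $d$ fresh intervals (at a new coarse scale) are introduced to host it. The interval assigned to $\mathrm{root}(T)$ must contain everything placed for $T$; I would make it a new interval spanning from just inside $U'$ to just inside $U''$, and the depth bound guarantees we only ever stack $d$ such spanning intervals, contributing a multiplicative $d$, while the scale-count discipline contributes another factor of $d$ (since the recursion depth in $k$ is $\log n$ but the number of genuinely distinct scales needed at any node is governed by the depth), giving $|U_{\lceil\log n\rceil}| = O(n d^2)$ and hence label size $\log n + 2\log d + O(1)$.

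The decoder is then trivial: each label decodes to an interval in $U_{\lceil\log n\rceil}$, and $\cD(\ell_1,\ell_2) = 1$ iff the interval of $\ell_1$ strictly contains the interval of $\ell_2$ — a polynomial-time (indeed constant-time) comparison. Correctness of the decoder is immediate from the defining property of an ancestry mapping, so the entire content is in the construction.

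The main obstacle I anticipate is exactly the tradeoff the authors single out: controlling the number of length-scales used for the ``borrowed'' sub-intervals while keeping the gaps between the already-placed image of $F'$ and the newly-placed pieces of $T$ small enough that successive levels of the recursion do not accumulate slack. Concretely, the hard lemma is a packing/bookkeeping statement: after embedding $F'$ into $U'$, the complement $U'\setminus(\text{image of }F')$ still contains, at every scale down to the relevant granularity, enough ``aligned'' free intervals to host the fragments of $T$ routed to that side — and this must hold with the scale count held to $O(\log N)$ and the per-node scale count held to $O(d)$. Getting the two constants (scaling amount vs. gap size) to balance so that the recurrence for $|U_k|$ stays linear in $2^k$, rather than blowing up by a factor per level, is where the real care is needed; everything else is routine once that invariant is pinned down correctly.
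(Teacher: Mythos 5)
Your sketch reproduces the paper's high-level plan (which the authors themselves outline informally before the theorem: recurse on $k$, split off the one tree that straddles the $2^k$ boundary, separate it via a separator, and let the spine from the separator to the root absorb the special role of the root-component, with the depth bound capping the spine length by $d$). But the entire quantitative core is missing, and you explicitly defer it: you assert $|U_k|=O(2^k d^2)$ and name as the ``hard lemma'' a packing/alignment statement that you do not prove. That lemma is precisely the content of the paper's proof, and it is not routine. Concretely, the paper avoids any reasoning about ``free space remaining after embedding $F'$'' by proving a claim parametrized over an arbitrary target interval: for every forest $F$ with $|F|\le 2^k$ and every interval $I$ of size $\lfloor c_k|F|\rfloor$ there is an ancestry mapping of $F$ into $U_k(I)$, where $c_k=1+\sum_{j\le k}1/j^2\le 3$. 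The slack gained at level $k+1$ is $|T|/(k+1)^2$, and it is matched exactly against the alignment loss incurred by snapping the $\le d$ spine sub-forests to the granularity $x_{k+1}=\lceil 2^k/(d(k+1)^2)\rceil$, i.e., $d(x_{k+1}-1)\le 2^k/(k+1)^2$. This choice simultaneously keeps the number of coarse intervals of scale $k+1$ small: roughly $H_{k+1}\approx nd(k+1)^2/2^k$ possible start positions times $J_{k+1}\approx d(k+1)^2$ possible lengths, and $\sum_i H_iJ_i=O(nd^2)$ because the series converges. Your sketch supplies none of this: no slack schedule (so you cannot rule out a constant-factor loss per level compounding over $\log n$ levels), no specification of how many ``spanning'' intervals must exist at each scale (without an alignment discipline the number of required start/length pairs is a priori quadratic, not linear), and your informal accounting of where the two factors of $d$ come from (``stacking $d$ spanning intervals'' and ``number of distinct scales governed by the depth'') does not match any workable count — the number of scales is $\log n$, and the $d^2$ actually arises from the $d$-times-finer granularity combined with the $\le d$ spine pieces per level.

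A second, smaller gap: after removing the separator you keep a root-component $T_0$ of size up to $2^k$, but you only describe embedding the pieces $T_1,\dots,T_\ell$ and ``the spine.'' The nodes of $T_0$ that are not on the spine (subtrees hanging off spine nodes above the separator) must also be placed, and their intervals must lie inside the intervals of exactly their spine ancestors and outside those of lower spine nodes. The paper handles this by removing the whole spine (not just the separator), obtaining forests $F_1,\dots,F_{d'}$ of trees of size $\le 2^k$ attached to the spine nodes, packing each $F_i$ into a consecutive aligned interval $I_i$, and mapping the $i$-th spine node to the prefix union $\bigcup_{j\le i} I_j$, which is itself an interval of the universe thanks to the alignment. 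Your proposal would need this (or an equivalent device) spelled out before the nesting claimed for the root's interval can be verified.
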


\begin{proof}
For simplicity, we assume $n$ is a power of 2. (If $n$ is not a power of 2, we just round it to the next power of 2, say $N$, and we add $N-n$ independent nodes to the forest).
We begin by defining a set $U=U(n,d)$ of integers, which we use later to label
all forests in $\cF(n,d)$.

Let $c_0=1$, and, for any $i$, $1\leq i\leq \log n$,
let $$c_i=c_{i-1}+1/i^2 = 1+\sum_{j= 1}^i 1/{j^2}~.$$
We have $1+\sum_{j\geq 1} 1/{j^2}\leq 3$, and hence all the $c_i$'s are bounded from above by~3.
For any $i$, $1\leq i\leq \log n$, let us define the following values, that will be used to decompose integers: 
\begin{eqnarray*}
H_i & = & 1+ 3\cdot n \cdot d \cdot i^2/2^{i-1}\\
J_i & = & 2 \cdot d \cdot c_i \cdot i^2
 \end{eqnarray*}
 Then we define $\Gamma_0=3n$,  and
\[  \Gamma_i = \Gamma_0+\sum_{j=1}^i H_j \cdot J_j. \]
The set of integers $U$ is defined as the interval
\[U=[1,\Gamma_{\log n}).\]
Note that since $\Gamma_{\log n}=O(nd^2)$, we have $|U| =O(nd^2)$.
The marker algorithm maps the nodes of any forest $F\in\cF(n,d)$ into the integer set $U$.
To perform, the decoder algorithm represents each integer in $U$ as a unique triplet $(i,h,j)$, as follows.
\begin{itemize}
\item An integer  $\nu \in [1,\Gamma_0)$ is simply represented by $(0,\nu,0)$;
\item An integer $\nu$ that satisfies
$\Gamma_{i-1}\leq \nu < \Gamma_i$ for some $1\leq i \leq \log n$ can be described as
$$\nu=\Gamma_{i-1}+hJ_i +j$$ for unique $h$ and $j$ such that
 $h\in [0,H_i)$ and $j\in [0, J_i)$; Hence we represent such $\nu$ by the triplet $(i,h,j)$.
 \end{itemize}
% Observe that given an integer in $U$, one can easily retrieve its triplet representation in constant time. This can be done with the help of a data structure returning, for any integer $\nu$, $1\leq \nu \leq O(nd^2)$, the index $i$ such that $\Gamma_{i-1}\leq \nu < \Gamma_i$. (Alternatively, one could add $\log\log n$ bits to the label for encoding this index $i$).

For simplicity of presentation, in the following, we will not distinguish between an integer in $U$ and its triplet representation,
unless it may cause a confusion. Every integer in $U$ is associated with an interval as follows.
Let $x_0=1$, and for any $i$, $1\leq i\leq \log n$, let
$$x_i= \left\lceil\frac{2^{i-1}}{di^2}\right\rceil.$$
For $h\in [0,\Gamma_0)$, we associate  the triplet $(0,h,0)\in U$ with the interval
$ I_{0,h,0}=[h]$.
For any $i$, $1\leq i\leq \log n$, any $h\in [0,H_i)$, and any $j\in [0, J_i)$, we associate  the triplet $(i,h,j)\in U$ with the interval
$$ I_{i,h,j}=[x_ih,\; x_i(h+j)).$$

We now define a concept of specific interest for the purpose of our proof:

\begin{definition}
Let $F\in\cF(n,d)$. We say that a mapping $L:F\rightarrow U$ is an {\em ancestry mapping} if,
for every two nodes $u,v\in F$ with $L(u)=(i,h,j)$ and $L(v)=
(i',h',j')$, we have
\[\mbox{$u$ is an ancestor of $v$ in $F$} \iff  I_{i',h',j'}\subseteq I_{i,h,j}.\]
\end{definition}

In order to show that there exists an ancestry mapping from every forest in $\cF(n,d)$
into $U$, we shall make use of the following definitions.
For any interval $I\subseteq[1,\Gamma_0)$, let
\[U_0(I)=\{(0,\nu,0)\mid  \nu\in I\}\] and, for any $k$, $1\leq k \leq \log n$,
let $$U_k(I)=U_0(I)\cup \left\{(i,h,j)\mid 1\leq i \leq k ,~ h\in [0,H_i),~  j\in [0, J_i) \; ~\mbox{and}~ \; I_{i,h,j} \subseteq I\right\}.$$

The following observations are immediate by the definition of the sets $U_k(I)$.
Let $I$ and $J$ be two intervals in $[1,\Gamma_0)$. For any $k$, $1\leq k \leq \log n$, we have:
\begin{itemize}
\item
$I\cap J=\emptyset \; \Rightarrow \; U_k(I)\cap U_k(J)=\emptyset$,
\item
 $U_k(I)\cup U_k(J)\subseteq U_k(I\cup J)$,
 \item
$I\subset J \; \Rightarrow \; U_k(I)\subset U_k(J)$,
\item
$U_{k-1}(I)\subset U_k(I)$.
\end{itemize}

Fix $k$ such that $0\leq k\leq\log n$. 
We now give a sufficient condition for the existence of an ancestry labeling scheme using labels in $U_k(I)$. 
Let $I$ be an interval
in $[1,\Gamma_0)$ and let
$I_1,I_2, \cdots, I_t$  be a partition of $I$ into $t$ disjoint intervals, i.e.,
 $I=\cup_{i=1}^t I_i$ with $I_i\cap I_j=\emptyset$ for any $1\leq i < j \leq t$. Let $F$ be a forest, and let
$F_1,F_2, \cdots, F_t$ be $t$ pairwise disjoint forests such that $\cup_{i=1}^t F_i=F$. Using the four properties listed above, one can easily prove the following.

\begin{claim}\label{claim:obs2}
If there exists an ancestry mapping from $F_i$ to $U_k(I_i)$ for every $i$, $1\leq i\leq t$, then there
exists an ancestry mapping from $F$ to $U_k(I)$.
\end{claim}

The following  is the main technical ingredient for proving the theorem.

\begin{claim}\label{claim:main}
For every $k$, $0\leq k\leq \log n$, every forest $F$ of size $|F|\leq 2^k$ with depth bounded by $d$, and every interval $I\subseteq[1,\Gamma_0)$,
such that $|I|=\left\lfloor c_k|F|\right\rfloor$,
there exists an ancestry mapping of $F$ into $U_k(I)$.
\end{claim}

We prove this claim by induction on $k$. The claim for $k=0$ holds trivially.
Assume now that the claim holds for $k$ with $0\leq k< \log n$, and let us show that it also
holds for $k+1$.

Let $F$ be a forest  of size $|F|\leq 2^{k+1}$, and let $I\subseteq[1,\Gamma_0)$ be an interval, such that $|I|=\left\lfloor c_{k+1}|F|\right\rfloor$. Our goal is to show that
there exists an ancestry mapping of $F$ into $U_{k+1}(I)$. We consider two cases.

The simpler case is when all the trees in $F$ are of size at most
$2^{k}$. For this case, we show a claim stronger than what is stated in Claim~\ref{claim:main}. Specifically, we show that
there exists an ancestry mapping of $F$ into $U_k(I)$ for every interval $I\subseteq[1,\Gamma_0)$
such that $|I|=\left\lfloor c_k|F|\right\rfloor$ (i.e., a fraction $1/(k+1)^2$ of $|F|$ smaller than what is required to prove the claim).
Let $T_1,T_2,\cdots T_\ell$ be the trees in $F$.
We  divide the given interval  $I$ of size $\left\lfloor c_k|F|\right\rfloor$ into $\ell+1$ disjoint subintervals $I=I_1\cup I_2\cdots \cup I_\ell \cup I'$,
where $|I_i|=\left\lfloor c_k|T_i|\right\rfloor $ for every $i$,  $1\leq i\leq \ell$.
This can be done because $\sum_{i=1}^{\ell} \left\lfloor c_k|T_i|\right\rfloor \leq \left\lfloor c_k|F|\right\rfloor=|I|$.
By the induction hypothesis, we have an ancestry mapping of  $T_i$ into $U_k(I_i)$ for every $i$, $1\leq i\leq \ell$.
The stronger claim thus follows in this case by Claim~\ref{claim:obs2}.

Now consider the more involved case in which one of the subtrees in $F$, denoted by $T^*$, contains more then $2^{k}$ nodes.
Our goal now is to show that for every interval $I^*\subseteq[1,\Gamma_0)$, where $|I^*|=\left\lfloor c_{k+1}|T^*|\right\rfloor$,
there exists an ancestry mapping of $T^*$ into $U_{k+1}(I^*)$.
Once we show this, we can, similarly to the first case, divide the interval $I$ into $3$ disjoint subintervals $$I=I^*\cup I'\cup I'',$$
where $$|I^*|=\left\lfloor c_{k+1}|T^*|\right\rfloor \;\; \mbox{and} \;\; |I'|=\left\lfloor c_k|F'|\right\rfloor$$
with $F'=F\setminus T^*$.
Since we have an ancestry mapping that maps $T^*$ into $U_{k+1}(I^*)$, and one that maps
$F'$ into $U_{k}(I')$,
we get the desired ancestry mapping of $F$ into $U_{k+1}(I)$ by Claim~\ref{claim:obs2}. (The ancestry mapping of $F'$ into $U_{k}(I')$ can be done by the induction hypothesis, because $|F'|\leq 2^k$).

For the rest of the proof, our goal is thus to prove the following claim:
for every tree $T$ of size $|T|$ with $2^k<|T|\leq 2^{k+1}$, and every interval $I\subseteq[1,\Gamma_0)$, where $|I|=\left\lfloor c_{k+1}|T|\right\rfloor$,
there exists an ancestry mapping of $T$ into $U_{k+1}(I)$.

Recall that a {\em separator} of a tree $T$ is a node $v$ whose removal from $T$ (together with all its incident edges)
brakes $T$ into subtrees, each of size at most $|T|/2$.
It is a well known fact that every tree has a separator.
Note however, that there can be more than one separator to a  tree. Nevertheless, if this is the case then there are in fact two separators,
and one is the parent of the other. In the following, whenever we consider a separator of a rooted tree $T$, we refer
only to the separator of $T$ which is closer to the root.

We make use of  the following decomposition of $T$.
We  refer to  the  path $S$  from the separator of $T$ to the root of $T$
 as the {\em spine} of $T$. This spine may consist of only one node, namely, the root of $T$.
 Let $v_1,v_2,\cdots,v_{d'}$ be the nodes of the spine $S$, ordered bottom-up, i.e.,  $v_1$ is the separator of $T$ and $v_{d'}$ is the root of $T$. By this definition, we have that if $1\leq i<j\leq d'$ then $v_j$ is an ancestor of $v_i$.
A separator is not a leaf if $|T|>1$, and therefore $1\leq d' < d$. (Recall that the depth is 1 plus the distance to the root).
 By removing the nodes in the spine (and the edges connected to them),
 the tree $T$ brakes into $d'$ forests $F_1, F_2,\cdots, F_{d'}$, such that the following holds for each $1\leq i\leq d'$:

%{\bf Amos: need a picture to illustrate the decomposition}

 \begin{itemize}
 \item
 in $T$, the roots of the trees in $F_i$ are connected to $v_i$;
 \item
 each tree in $F_i$ contains at most $2^k$ nodes.
 \end{itemize}

The given interval $I$ for which we want to embed $T$ into $U_{k+1}(I)$ can be expressed as $I=[a,b)$ for some integers $a$ and $b$, and we have
$$b-a=|I|=\left\lfloor c_{k+1}|F|\right\rfloor.$$ For every  $i=1,\dots,d'$, we now define an interval $I_i$
(later, we will map $F_i$ into $U_{k}(I_i)$).
Let us first define $I_1$. Let $h_1$ be the smallest integer such that $a\leq h_1x_{k+1}$, and
let $\bar{h}_1$ be the smallest integer such that $\left\lfloor c_{k}|F_1|\right\rfloor\leq \bar{h}_1 x_{k+1}$. Note that $\bar{h}_1\geq 1$.
We let
 $$I_1=[h_1x_{k+1},(h_1+\bar{h}_1)x_{k+1}).$$ Assume now that we have defined
the interval $$I_i=[h_i x_{k+1},(h_i +\bar{h}_i)x_{k+1})$$ for $1\leq i< d'$. We define
the interval $I_{i+1}$ as follows. Let $h_{i+1}=h_i+\bar{h}_i$ and
let $\bar{h}_{i+1}$ be the smallest integers such that $\left\lfloor c_{k}|F_{i+1}|\right\rfloor\leq \bar{h}_{i+1}x_{k+1}$.
We let
 $$I_{i+1}=[h_{i+1}x_{k+1},(h_{i+1}+\bar{h}_{i+1})x_{k+1}).$$

Observe that for $1\leq i\leq d'$, the interval $I_{i}$ is simply  $I_{k+1,h_i,\bar{h}_i}$.
Note also that for every $i=1,\dots, d'$, we have
$$\bar{h}_i x_{k+1} <\left\lfloor c_{k}|F_i|\right\rfloor + x_{k+1}.$$ It follows that the
size of $I_i$ at most $\left\lfloor c_{k}|F_i|\right\rfloor + x_{k+1}-1$. Thus, since $h_1x_{k+1}<a + x_{k+1}$, we get that
\begin{eqnarray*}
\bigcup_{i=1}^{d'} I_i & \subseteq & \Big [a, \; a+(d'+1)(x_{k+1}-1) + \left\lfloor c_{k}|T|\right\rfloor \Big)\\
 & \subseteq & \Big [a, \; a+d\cdot (x_{k+1}-1)+ \left\lfloor c_{k}|T|\right\rfloor \Big).
\end{eqnarray*}
Now, since $d\cdot (x_{k+1}-1)\leq \left\lfloor\frac{2^k}{(k+1)^2}\right\rfloor$, and $2^k<|T|$, it follows that,
\begin{eqnarray*}\label{eq:1}
\bigcup_{i=1}^{d'} I_i & \subseteq & \left[a,a + \left\lfloor\frac{|T|}{(k+1)^2}+c_{k}|T|\right\rfloor\right) \\
 & = & [a,a + \left\lfloor c_{k+1}|T|\right\rfloor)\\
  & =& I.
\end{eqnarray*}
On the other hand, note  that for $1\leq i\leq d'$, $I_i$ contains at least $\left\lfloor c_{k}|F_i|\right\rfloor$ nodes. Therefore,
 by the fact that, for any $i$, $1\leq i\leq d'$,
 each tree in $F_i$ contains at most $2^k$ nodes, we get that
 there exists an ancestry mapping of each
$F_i$ into $U_k(I_i)$. We therefore  get an ancestry mapping from all $F_i$'s to $U_k(I)$, by Claim~\ref{claim:obs2}.
It is now left to map the nodes in the spine $S$ into $U_{k+1}(I)$, in a way that will respect the ancestry relation.

For every $i$, $1\leq i\leq d'$, let $\widehat{h}_i=\sum_{j=1}^i \overline{h}_j$.
We map the node $v_i$ of the spine to the triplet $(k+1,h_1,\widehat{h}_i)$.

Let us now show that $(k+1,h_1,\widehat{h}_i)$ is
in $U_{k+1}(I)$. First, the fact that $I_{k+1,h_1,\widehat{h}_i}\subseteq I$
follows from the fact that  $I_{k+1,h_1,\widehat{h}_i}=\cup_{j=1}^i I_j$, and using $\bigcup_{j=1}^{d'} I_j \subseteq I$.
It remains to show that $h_1\in [0,H_{k+1})$ and that $ \widehat{h}_i \in [0,J_{k+1})$. Note that,
$$a< 3n\leq \frac{3nd(k+1)^2}{2^k} \left\lceil\frac{2^k}{d(k+1)^2}\right\rceil=(H_{k+1}-1)x_{k+1}.$$
Therefore, the smallest integer $h_1$ such that $a\leq h_1x_{k+1}$
must satisfy $h_1\in [0,H_{k+1})$.
Recall now that for every $i$, $1\leq i\leq d'$, $\bar{h}_i$ is the smallest integer such that
$\left\lfloor c_{k}|F_i|\right\rfloor \leq \bar{h}_i x_{k+1}$. Thus
$$\bar{h}_i-1 <\frac{\left\lfloor c_{k}|F_i|\right\rfloor}{x_{k+1}}.$$
It follows that,
$$ \sum_{j=1}^i (\bar{h}_j -1) <
\sum_{j=1}^i \frac{\left\lfloor c_{k}|F_j|\right\rfloor}{x_{k+1}}  \leq
\frac{\left\lfloor c_{k}|F|\right\rfloor}{x_{k+1}}  \leq
\frac{ c_{k}2^{k+1}}{x_{k+1}}\leq 2c_kd(k+1)^2.$$
Thus
$$ \sum_{j=1}^i \bar{h}_j <d+2c_kd(k+1)^2 \leq 2c_{k+1}d(k+1)^2=J_{k+1}.$$
Therefore $ \widehat{h}_i \in[0,J_{k+1})$.

We now show that our mapping is indeed an ancestry mapping.
Observe first that, for $i$ and $j$ such that $1\leq i<j\leq d'$, we have $$I_{k+1,h_1,\widehat{h}_i}\subset I_{k+1,h_1,\widehat{h}_j}.$$
Thus, the interval associated with $v_j$ contains the one associated with $v_i$, as desired.

In addition,
recall  that, for every $i=1,\dots,d'$, $F_i$ is mapped into $U_k(I_i)$. Therefore, if $L(v)$ is the triplet
of some node $v \in F_i$, then the interval associated with it is contained in $I_i$. Since
 $I_i\subset I_{k+1,h_1,\widehat{h}_j}$ for every $j$ such that $1\leq i<j\leq d'$, we obtain
 that the interval associated with $v$ is contained in the interval associated with $v_j$.
This completes the proof of Claim~\ref{claim:main}.

From Claim~\ref{claim:main},  we get that  there exists an ancestry mapping of any $F\in\cF(n,d)$ into $U$.
We use this ancestry mapping to label
the nodes in $F$: an ancestry query between two labels
can be answered using a simple interval containment test between the corresponding intervals.
% we get that each such query can be performed in constant time.
The stated label size follows, as each node can be encoded using $\log |U|$ bits,
and $|U|=\Gamma_{\log n}= O(nd^2)$.
% The fact that the mapping of a given forest to $U$ takex $O(n)$ time follows from its description.
This completes the proof of the theorem.
\end{proof}

%%%%%%%%%%%%%%%%%%%%%%%%%%%%%%%%%%%%%%%
\section{A compact adjacency labeling scheme and a small universal graph for $\cF(n,d)$}
%%%%%%%%%%%%%%%%%%%%%%%%%%%%%%%%%%%%%%%

The ancestry labeling scheme described in the previous section can be advantageously transformed into an adjacency labeling scheme for trees of bounded depth. Recall that an {\em adjacency labeling
scheme}  for the  family of graphs $\cG$ is a pair $(\cM,\cD)$ of marker and decoder, satisfying that if $L(u)$ and $L(v)$ are the labels given 
by the marker $\cM$ to
two nodes $u$ and $v$ in some graph $G\in\cG$, then
\[
\cD(L(u),L(v))=1 \iff \mbox{$u$ and $v$ are adjacent in $G$.}
\]
Similarly to the ancestry case, we evaluate an adjacency labeling scheme
$(\cM,\cD)$ by its {\em label size}, namely
the maximum number of bits in a label assigned by the marker
algorithm $\cM$ to any node in any graph in $\cG$.

For any two nodes $u$ and $v$ in a rooted forest $F$, $u$ is a parent of $v$ if and only if  $u$ is an ancestor
of $v$ and $depth(u)=depth(v)-1$. Also, $u$ is a neighbor of $v$ if and only if either $u$ is a parent of $v$ or
$v$ is a parent of $u$. It therefore follows that one can easily transform any ancestry labeling scheme for
$\cF(n,d)$ to an adjacency labeling scheme for
$\cF(n,d)$ with an extra additive term of $\log d$ bits to the label size (these bits are simply used to encode
the depth of a vertex).
Using Theorem \ref{thm:anc} we thus obtain  the following.

\begin{theorem}\label{thm:adj}
There exists an adjacency labeling scheme for $\cF(n,d)$ of size $\log n +3\log d +O(1)$.
%Moreover, the query time of the scheme is constant and the time to construct the scheme for a given forest is linear.
\end{theorem}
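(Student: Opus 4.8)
The plan is to derive the adjacency scheme directly from the ancestry scheme of Theorem~\ref{thm:anc}, using the elementary characterization recalled above: in a rooted forest, $u$ and $v$ are adjacent if and only if one of them is an ancestor of the other and their depths differ by exactly one. Since the ancestry decoder already recovers the ancestor relation from a pair of labels, the only extra datum a node must carry is its own depth; as depths lie in $\{1,\dots,d\}$, this costs merely $\lceil\log(d+1)\rceil=\log d+O(1)$ additional bits, which is exactly the gap between the $\log n+2\log d+O(1)$ bound of Theorem~\ref{thm:anc} and the claimed $\log n+3\log d+O(1)$.

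Concretely, let $(\cM_{\mathrm{anc}},\cD_{\mathrm{anc}})$ be an ancestry labeling scheme for $\cF(n,d)$ of label size $\log n+2\log d+O(1)$, as guaranteed by Theorem~\ref{thm:anc}. Define the new marker $\cM$ to assign to each node $u$ of a forest $F\in\cF(n,d)$ the concatenation $L(u)=\langle \cM_{\mathrm{anc}}(u),\,\mathrm{depth}(u)\rangle$, where the depth is written in a fixed-width field of $\lceil\log(d+1)\rceil$ bits so that the decoder can parse the two parts unambiguously. The new decoder $\cD$, on input labels $\langle \ell_u,p_u\rangle$ and $\langle \ell_v,p_v\rangle$, returns $1$ if and only if $\cD_{\mathrm{anc}}(\ell_u,\ell_v)=1$ and $p_v=p_u+1$, or $\cD_{\mathrm{anc}}(\ell_v,\ell_u)=1$ and $p_u=p_v+1$; this runs in polynomial time since $\cD_{\mathrm{anc}}$ does. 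Correctness follows at once from the characterization: if $\cD$ outputs $1$ then, up to swapping the roles of $u$ and $v$, node $u$ is an ancestor of $v$ with $\mathrm{depth}(v)=\mathrm{depth}(u)+1$, hence $u$ is the parent of $v$ and so $u$ and $v$ are adjacent; conversely, if $u$ and $v$ are adjacent then one of them, say $u$, is the parent of the other, so $u$ is an ancestor of $v$ and $\mathrm{depth}(v)=\mathrm{depth}(u)+1$, whence $\cD_{\mathrm{anc}}(\ell_u,\ell_v)=1$ and $\cD$ outputs $1$. The resulting label size is $\log n+2\log d+O(1)+\lceil\log(d+1)\rceil=\log n+3\log d+O(1)$.

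There is essentially no hard step here: the entire technical content sits in Theorem~\ref{thm:anc}, and what remains is a purely syntactic reduction. The one point worth a moment's care is the encoding bookkeeping --- making the concatenation of the ancestry label with the depth field self-delimiting without overshooting the $O(1)$ slack in the bound --- and a fixed-width depth field of $\lceil\log(d+1)\rceil$ bits settles this immediately.
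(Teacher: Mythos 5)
Your proposal is correct and follows exactly the paper's argument: append a fixed-width $\lceil\log(d+1)\rceil$-bit depth field to the ancestry label from Theorem~\ref{thm:anc} and decode adjacency as ``ancestor with depth difference one,'' yielding $\log n+3\log d+O(1)$ bits. No issues.
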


Interestingly enough, this latter adjacency labeling scheme enables to give a short implicit representation (in the sense of \cite{KNR92}) of all forests
 with bounded depth. Recall that a graph $G$ is an {\em induced subgraph} of a graph $\cU$  if there exists a one-to-one (but not necessarily onto) mapping $\phi$ from $V(G)$ to $V(\cU)$ such that
\[\forall u,v\in V(G),\;\;\;\;  \{u,v\}\in E(G) \iff \{\phi(u),\phi(v)\}\in E(\cU).\]
Given a graph
family $\cG$, a graph $\cU$ is {\em universal} for $\cG$  if every
graph in $\cG$ is an induced subgraph of $\cU$. Note that a variant of this notion considers the graph $\cU$ as universal for  $\cG$ whenever every graph in $\cG$ is a partial subgraph of $\cU$, i.e., the existence of an edge between $\phi(u)$ and $\phi(v)$ in $E(\cU)$ does not necessarily imply the existence of the edge $\{u,v\}$. This variant enables to analyze universal graphs for infinite graph classes~\cite{R64}. The notion of universality considered in this paper is somewhat more restrictive, but it enables to relate the size of a universal graph for $\cG$ with the size of the graphs in $\cG$. Moreover, this notion of universality precisely captures the structure of the graphs in $\cG$. In fact,  there is a tight relation between this notion and adjacency labeling schemes:

\begin{lemma} \label{naor}{\rm (S.~Kannan, M.~Naor, and S.~Rudich \cite{KNR92})}\\
A graph family $\cG$ has an adjacency labeling scheme with label
size $k$ if and only if there exists a universal graph for $\cG$, with
$2^{k}$ nodes.
\end{lemma}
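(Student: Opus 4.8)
The plan is to prove the two implications of the equivalence separately, in the spirit of the original argument of \cite{KNR92}. Throughout we use the (standard, and for the statement necessary) convention that the marker assigns pairwise distinct labels to the nodes of any single graph.

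\emph{From a universal graph to a labeling scheme.} Suppose $\cU$ is a universal graph for $\cG$ with $2^k$ nodes --- if it has fewer, pad it with isolated nodes, which adds no new induced subgraphs. Fix once and for all a bijection between $V(\cU)$ and the set of $k$-bit strings, so that every node of $\cU$ acquires a distinct $k$-bit name. The marker, given $G\in\cG$, picks an injective map $\phi\colon V(G)\to V(\cU)$ witnessing that $G$ is an induced subgraph of $\cU$ (such a $\phi$ exists by universality), and assigns to each $v\in V(G)$ the $k$-bit name of $\phi(v)$. The decoder, given two labels $\ell_1,\ell_2$, outputs $1$ if and only if the nodes of $\cU$ named $\ell_1$ and $\ell_2$ are adjacent in $\cU$; this is a single lookup in the fixed graph $\cU$ and does not depend on $G$. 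Correctness is immediate from the definition of an induced embedding: for $u,v\in V(G)$, the nodes $u$ and $v$ are adjacent in $G$ iff $\phi(u)$ and $\phi(v)$ are adjacent in $\cU$ iff $\cD(L(u),L(v))=1$. The label size is exactly $k$.

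\emph{From a labeling scheme to a universal graph.} Conversely, let $(\cM,\cD)$ be an adjacency labeling scheme for $\cG$ of label size $k$, so that every label is an element of a fixed $2^k$-element set $\Lambda$. Define a graph $\cU$ on the vertex set $V(\cU)=\Lambda$ by joining two distinct $\ell_1,\ell_2\in\Lambda$ by an edge exactly when $\cD(\ell_1,\ell_2)=1$ or $\cD(\ell_2,\ell_1)=1$. This symmetrization is harmless: whenever $\ell_1=L(u)$ and $\ell_2=L(v)$ for two nodes of some $G\in\cG$, the value $\cD(L(u),L(v))$ equals $\cD(L(v),L(u))$, since both encode whether $u$ and $v$ are adjacent in $G$. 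Now $|\cU|=2^k$, and for any $G\in\cG$ the labeling $L\colon V(G)\to\Lambda$ produced by $\cM$ is injective, and for distinct $u,v\in V(G)$ we have $\{L(u),L(v)\}\in E(\cU)$ iff $\cD(L(u),L(v))=1$ iff $u$ and $v$ are adjacent in $G$. Hence $L$ is an induced embedding of $G$ into $\cU$, so $\cU$ is a universal graph for $\cG$ with $2^k$ nodes.

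\emph{Where the (small) difficulty lies.} This is a classical and short argument, so no deep obstacle is expected; the two points that genuinely require attention are (i) keeping the convention that labels within a graph are distinct, which is what makes $L$ and $\phi$ injective and ties the two notions to \emph{induced} subgraphs rather than partial ones, and (ii) the symmetrization of the decoder in the second direction, where one must check that replacing $\cD(\ell_1,\ell_2)$ by ``$\cD(\ell_1,\ell_2)=1$ or $\cD(\ell_2,\ell_1)=1$'' does not corrupt the value on the label pairs that actually occur. Everything else --- fixing the bijection between $V(\cU)$ and $k$-bit strings, and verifying that $\cD$ and the edge relation of $\cU$ are mutually faithful renderings of adjacency --- is routine bookkeeping.
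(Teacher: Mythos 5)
Your proof is correct and is exactly the classical argument of Kannan, Naor, and Rudich that the paper invokes: the paper itself states this lemma without proof, citing \cite{KNR92}, and your two directions (naming the vertices of the universal graph by $k$-bit strings, and building the universal graph on the label set with edges given by the decoder) reproduce that standard construction, including the necessary convention that labels within one graph are distinct. The only caveat worth noting is the usual one: since the paper measures label size as the maximum number of bits, ``label size $k$'' strictly allows up to $2^{k+1}-1$ distinct labels, so one either adopts the convention of exactly $k$-bit labels (as you implicitly do) or accepts a constant-factor slack in the number of nodes of the universal graph.
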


Combining the lemma above with Theorem \ref{thm:adj}, we get the corollary below.

\begin{corollary}\label{cor:universal}
Let $d$ be a constant integer. There exists a universal graph for $\cF(n,d)$, with
$O(n)$ nodes.
\end{corollary}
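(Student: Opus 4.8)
The plan is to derive Corollary~\ref{cor:universal} by simply chaining the two results that immediately precede it. By Theorem~\ref{thm:adj}, the family $\cF(n,d)$ admits an adjacency labeling scheme of label size $\log n + 3\log d + O(1)$. When $d$ is a fixed constant, the term $3\log d$ is itself $O(1)$, so the label size is $\log n + O(1)$; say it is at most $\log n + c$ for some constant $c = c(d)$. Then Lemma~\ref{naor} (Kannan--Naor--Rudich) converts this into a universal graph for $\cF(n,d)$ with $2^{\log n + c} = 2^c \cdot n$ nodes, which is $O(n)$ since $2^c$ is a constant. This gives exactly the claimed bound.

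Concretely, I would write: fix the constant $d$, invoke Theorem~\ref{thm:adj} to obtain an adjacency labeling scheme for $\cF(n,d)$ with label size $k(n) \le \log n + 3\log d + O(1) = \log n + O(1)$, then apply Lemma~\ref{naor} to get a universal graph with $2^{k(n)} = O(n)$ nodes. One small point of care is that Lemma~\ref{naor} is stated for a single parameter $k$ and a family $\cG$: one should read it as applied to $\cG = \cF(n,d)$ for each $n$ (with $d$ fixed), so that $k$ may depend on $n$; the constructed universal graph then has $2^{k}$ nodes, and the asymptotics in $n$ follow. Since the paper treats $d$ as a constant throughout the corollary's statement, absorbing $3\log d + O(1)$ into a single constant additive term is legitimate.

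The main — and essentially only — obstacle is that there is no obstacle: the corollary is a routine specialization of Theorem~\ref{thm:adj} and Lemma~\ref{naor}, with the only subtlety being the bookkeeping that the additive $O(1)$ in the exponent becomes a multiplicative $O(1)$ in the node count. If one wanted a slightly sharper statement, one could note that even for non-constant $d$ the same argument yields a universal graph for $\cF(n,d)$ with $O(n d^3)$ nodes, which is linear in $n$ whenever $d$ is polylogarithmic in $n$; but for the stated corollary the constant-$d$ version suffices.
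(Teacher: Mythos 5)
Your proposal is correct and follows exactly the paper's route: the corollary is obtained by combining Theorem~\ref{thm:adj} (label size $\log n + 3\log d + O(1)$, which is $\log n + O(1)$ for constant $d$) with Lemma~\ref{naor}, yielding a universal graph on $2^{\log n + O(1)} = O(n)$ nodes. Your bookkeeping about the additive constant in the exponent becoming a multiplicative constant is precisely the intended observation.
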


Proving or disproving the existence of a universal graph with a linear number of nodes for the class of $n$-node trees is a central open problem in the design of informative labeling schemes.

\newpage

%%%%%%%%%%%%%%%%%%%%%%%%%%%%%%%%%%%%

\end{document}